\title{Distributed MIN-MAX Optimization Application to Time-optimal Consensus: An Alternating Projection Approach}
 \author{
  Hu Chunhe%
    \thanks{PhD student,  School of Automation Science and Electrical Engineering, HCHH@asee.buaa.edu.cn.}
  \ and
  Chen Zongji%%
    \thanks{Professor, School of Automation Science and Electrical Engineering, czj@buaa.edu.cn.}\\
  {\normalsize\itshape
   Beihang University (BUAA), Beijing, 100191,
   China}\\
   {\normalsize\itshape
   National Key Laboratory of Science and Technology on Aircraft Control, Beijing, 100191, China}
   }
 \newcommand{\eqnref}[1]{(\ref{#1})}
\begin{document}

\maketitle

\newtheorem{thm}{Theorem}%如果不采用章节号做前缀，则不用[section]
\newtheorem{defn}{Definition} %这句定义使得defn环境和thm共享编号
\newtheorem{lem}{Lemma}
\newtheorem{assump}{Assumption}
\newtheorem{rem}{Remark}

\begin{abstract}
In this paper, we proposed an alternating projection based algorithm to solve a class of distributed MIN-MAX convex optimization problems. We firstly transform this MIN-MAX problem into the problem of searching for the minimum distance between some hyper-plane and the intersection of the epigraphs of convex functions. The Bregman’s alternating method is employed in our algorithm to achieve the distance by iteratively projecting onto the hyper-plane and the intersection. The projection onto the intersection is obtained by cyclic Dykstra’s projection method. We further apply our algorithm to the minimum time multi-agent consensus problem. The attainable states set for the agent can be transformed into the epigraph of some convex functions, and the search for time-optimal state for consensus satisfies the MIN-MAX problem formulation. Finally, the numerous simulation proves the validity of our algorithm.
\end{abstract}

\section*{Nomenclature}

\begin{tabbing}
  XXXXXX \= \kill% this line sets tab stop
  $\mathbb{R}$ \> Real number field\\
  $\mathbb{R}^{n}$ \> Real coordinate space of n dimensions\\
  $\mathbb{Z_+}$ \> Positive integer set\\
  $f$ \> Convex function \\
  $\mathop{dom}$ \> Domain of a function\\
  $\operatorname{epi}$ \> Epigraph of function\\
  ${{P}}\left( \cdot  \right)$ \> Orthogonal projection onto Chebyshev sets\\
  $\bigcap$ \> Sets intersection operation\\
  $\mathop{dist}\left(\cdot,\cdot\right)$ \> Distance between two sets\\
  $X$ \> Agent state space\\
  $x$ \> Agent state vector \\
  $\dot x$ \> Agent state derivative vector \\
  $u$ \> Control input vector \\
  $\left\| \cdot  \right\|$ \> Euclidean norm \\[5pt]
  \textit{Subscript}\\
  $i$ \> Index number \\
 \end{tabbing}

%%%%%%%%%%%%%%%%%%%%%%%%%%%%%%%%%%%%%%%%%%%%%%%%%%%%%%%%%%%%%
%%%%SECTION I
%%%%%%%%%%%%%%%%%%%%%%%%%%%%%%%%%%%%%%%%%%%%%%%%%%%%%%%%%%%%%
\section{Introduction}

The distributed convex optimization studies the problem that all agents work cooperatively to minimize some optimal function to agents' local functions. Most existing works concentrate on minimizing the sum of all local function in \citen{nedic2013distributed,lee2013distributed,shi2013reaching,shi2012randomized}. However, MIN-MAX optimization that can be widely applied to portfolio optimization, control system design, engineering design\cite{hare2013derivative} also attract much more attentions as showed in \citen{srivastava2011distributed,hare2013derivative,srivastava2013distributed}. %In[],balabala , and in[], balabala.

In this paper, we propose a cyclic alternating projection based algorithm for distributed convex optimization problem. Alternating algorithm is widely applied to optimal approximation, e.g., solving linear system \cite{tam2012method}, linear programming \cite{tam2012method}, signal processing \cite{combettes2011proximal} and even Sudoku puzzle \cite{tam2012method}. Through the geometric interpretation, we show that the original problem can be reproduced as the problem of searching the minimum distance between the hyper-plane and the intersection of all epigraphs of local functions ${{f}_{i}}$. In our algorithm, we utilize Bregman's alternating projection to obtain the distance. During the procedure, the metric projection onto the intersection is necessary, so we employ another projection algorithm$-$Dykstra's alternating projection as the intermediate procedure. We show that in our algorithm, agents can have the consensus on the point that achieves the minimum distance with only the information from neighbors on a cyclic interaction topology. Thus, the optimal solution of the original problem is also achieved.
%This work has also been presented in [](under review) for the multi-vehicle minimum time rendezvous problem.

Further, we apply our algorithm to the time-optimal consensus problem. In the past decades, a large amount of attention has been devoted to coordinate control of multi-agent systems\cite{cao2013overview,ren2005survey,kranakis2006mobile}. The principal object of coordinated control is to allow the multi-agent work together and coordinate their behaviors in a cooperative fashion to achieve a common goal efficiently. Multi-agent coordination control consists of widespread research fields, including mission assignment, formation control, rendezvous control, consensus and distributed estimation, etc. The consensus problem as the fundamental problem has received greatly development. In contrast to asymptotic, to achieve the consensus with less time cost becomes attractive and reality for practical implementation. Nowadays, numerous works solve the finite time consensus instead. Though the agents can achieve consensus within finite steps through the vanishing speed\cite{li2014distributed} or the Lyapunov function derivative larger than some positive term\cite{xiao2013finite},  the time cost is still not optimal especially with input saturation constraints. To achieve consensus with minimum time cost, \citen{sundaram2007finite} proposed a minimal polynomial based observer to the consensus point to ensure time-optimal. 
%The works in \citen{yuan2013decentralised} derived from \citen{sundaram2007finite} balabala....convex optimization \cite{yang2013fast} weight on the topology \cite{hendrickx2013graph}

Different from the work mentioned above, we directly search the time-optimal consensus state according to the ability of each agent and the agents move towards the state with optimal control. We show that if let the attainable states set, the function to time, as the local functions for the agents, the time-optimal consensus problem is the (see in convex optimization)distributed convex optimization problem. 
%In particular, , and our algorithm can be directly applied.(see other application article)

This paper is organized as follows. In Section II, the problem formulation is proposed. Section III presents some preliminary on the epigraph and the alternating projection methods. The geometric interpretation on the problem and the MIN-MAX distributed convex optimization algorithm are proposed in Section IV. The application to the time-optimal consensus problem is provided as the proof of algorithm's efficiency in Section V. Finally in Section V, conclusions are provided.

%%%%%%%%%%%%%%%%%%%%%%%%%%%%%%%%%%%%%%%%%%%%%%%%%%%%%%%%%%%%%
%%%%SECTION II
%%%%%%%%%%%%%%%%%%%%%%%%%%%%%%%%%%%%%%%%%%%%%%%%%%%%%%%%%%%%%
\subsection{Problem Formulation}

In this paper, we consider the distributed multi-agent MIN-MAX convex optimization problems which is presented by

\begin{equation}\label{EQ_OriginalProblem}
\mathop {\min }\limits_{x \in X} \mathop {\max }\limits_{i \in \mathbb{Z_+} } {f_i}\left( x \right),
\end{equation}
where ${{f}_{i}}$ are convex functions that can be only known by agent $i$ and share the same domain, $X \subseteq {R^n}$ is a closed and convex set known by all agents. We assume the problem \eqnref{EQ_OriginalProblem} is well-posed such that $x^*$ achieves the minimum and the feasible solution is finite, i.e. $\left\| {\min \max {f_i}\left( x \right)} \right\| < \infty$. To achieve the overall optimality of the problem, the agents should work cooperatively. The difficulty comes in the distributed formulation that every agent can only access to their own functions and communicate with neighborhood according to the interaction topology. Therefore, agents execute their own local algorithms with only limited knowledge. We make further assumption on the communication topology:
\begin{assump}\label{Assump_1}
The communication between agents is limited in a cyclic digraph interaction topology. Without loss of generality, the interaction sequence is according to the number assignment to the agent, i.e., agent $i$ receiving information from $i-1$ ,$i\in 2,3,\ldots,N$ and agent $1$ receiving from $N$.
\end{assump}

%%%%%%%%%%%%%%%%%%%%%%%%%%%%%%%%%%%%%%%%%%%%%%%%%%%%%%%%%%%%%
%%%%SECTION III
%%%%%%%%%%%%%%%%%%%%%%%%%%%%%%%%%%%%%%%%%%%%%%%%%%%%%%%%%%%%%
\section{Preliminary}

Before proposing our algorithm, the several background information are presented as followed.

\subsection{Epigraph\cite{boyd2004convex}}
The epigraph of $f:{{R}^{n}}\to R$ is defined as

\begin{equation}\label{EQ_Epgraph1}
\operatorname{epi}f = \left\{ {\left( {x,t} \right)|x \in domf,f\left( x \right) \le t} \right\},
\end{equation}
which is a subset of ${{R}^{n+1}}$. A function is convex if and only if its epigraph is a convex set. Moreover, in terms of epigraphs, the point-wise maximum of convex functions corresponds to the intersection of epigraphs, which is also convex, we have

\begin{equation}\label{EQ_Epgraph2}
\operatorname{epi}\underset{i}{\mathop{\max}}\,{{f}_{i}}\left(x\right)=\underset{i}{\mathop{\bigcap }}\,\operatorname{epi}{{f}_{i}}\left( x \right).
\end{equation}

\subsection{Alternating projection algorithm}
Alternating projection algorithm is a type of geometric optimization
method. Through iteratively orthogonally projecting onto finite number of
Hilbert spaces successively in cyclic setting, the limit to the
projection sequence provides an approximation of the initial point
to those spaces.

Bregman's alternating projection, known as Bregman's algorithm or
Bregman's method designed for closed convex sets is always used to
obtain a point in the intersection of convex sets. In considering
two convex sets without intersection, Bregman's algorithm achieve
the distance between the two sets \cite{boyd2003alternating}.
Following theorem provides detailed descriptions.

Assume there are two convex sets $A,B\subseteq{\mathbb{R}^{n}}$, and
${{P}_{A}}\left( \cdot  \right),{{P}_{B}}\left( \cdot \right)$
denote projection on $A$ and $B$, respectively. We have the
following theorem on above sequences:

\begin{thm}\cite{cao2013overview}\label{Thm_Bregman}
Let $A,B\subseteq {{R}^{n}}$ be closed convex sets and $\left\{
{{a}_{n}} \right\}_{n=1}^{\infty }$,$\left\{ {{b}_{n}}
\right\}_{n=1}^{\infty }$ be the sequences generated by alternating
projection onto $A$ and $B$ from any intimal point ${{x}_{0}}\in
{\mathbb{R}^{n}}$:
\begin{eqnarray}\label{EQ_Bregman1}
{{a}_{n}} &=& {{P}_{A}}\left( {{b}_{n-1}} \right),\nonumber \\
{{b}_{n}} &=& {{P}_{B}}\left( {{a}_{n}} \right),\nonumber\\
{{a}_{1}} &=& {{P}_{A}}\left( {{x}_{0}} \right).
\end{eqnarray}
1. If $A\bigcap B\ne \emptyset$,
\begin{equation} \label{EQ_Bregman2}
{{a}_{n}},{{b}_{n}}\to {{x}^{*}}\in A\bigcap B.
\end{equation}
2. if $A\bigcap B = \emptyset$,
\begin{equation} \label{EQ_Bregman3}
{{a}_{n}}\to {{a}^{*}}\in A,{{b}_{n}}\to {{b}^{*}}\in B,
\end{equation}
where $\left\| {{a}^{*}}-{{b}^{*}} \right\|=dist(A,B)$.
\end{thm}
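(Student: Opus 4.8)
The plan rests on two standard facts about the metric projection $P_C$ onto a closed convex set $C\subseteq\mathbb{R}^n$: it is \emph{firmly nonexpansive}, and $z=P_C(x)$ is characterized by $z\in C$ together with $\langle x-z,\,y-z\rangle\le 0$ for all $y\in C$. The latter yields the Pythagorean inequality $\|x-y\|^2\ge\|x-P_C x\|^2+\|P_C x-y\|^2$ for every $y\in C$, which drives both cases; finite dimensionality enters only through Bolzano--Weierstrass. For case 1, fix any $p\in A\bigcap B$. Taking $C=A$, $x=b_{n-1}$, $y=p$ gives $\|a_n-p\|^2\le\|b_{n-1}-p\|^2-\|b_{n-1}-a_n\|^2$, and symmetrically for $B$, so the interlaced sequence $\|a_1-p\|\ge\|b_1-p\|\ge\|a_2-p\|\ge\cdots$ is nonincreasing (Fej\'er monotonicity). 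Hence $\{a_n\},\{b_n\}$ are bounded and the distances to $p$ converge, while telescoping the subtracted terms makes $\|b_{n-1}-a_n\|$ and $\|a_n-b_n\|$ summable, so both tend to $0$. Passing to a subsequence $a_{n_k}\to\bar x$, closedness of $A,B$ and the vanishing gaps place $\bar x\in A\bigcap B$; re-running Fej\'er monotonicity with $p=\bar x$ forces the full sequences to converge, $a_n,b_n\to\bar x=x^*$.

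For case 2, let $d=\operatorname{dist}(A,B)$ and consider the composition $T:=P_A\circ P_B$, so that $a_{n+1}=Ta_n$. Being a composition of firmly nonexpansive maps, $T$ is averaged (hence nonexpansive), and its fixed points are exactly those $a\in A$ that, paired with $P_B a$, form a pair of mutual nearest points --- which for convex sets attains the global distance $d$. The fixed-point set is nonempty precisely when $d$ is attained; granting this, fix $\hat a\in\operatorname{Fix}T$ and $\hat b:=P_B\hat a$, so $\|\hat a-\hat b\|=d$. Nonexpansiveness of $T$ makes $\|a_n-\hat a\|$ nonincreasing, so $\{a_n\}$ is bounded, and averagedness yields asymptotic regularity $\|a_{n+1}-a_n\|=\|Ta_n-a_n\|\to 0$. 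A subsequence $a_{n_k}\to\tilde a$ then satisfies $T\tilde a=\tilde a$ by continuity of the projections, so $\tilde a\in\operatorname{Fix}T$; Fej\'er monotonicity with $p=\tilde a$ upgrades this to $a_n\to\tilde a=:a^*$, and continuity of $P_B$ gives $b_n\to P_B a^*=:b^*$ with $\|a^*-b^*\|=d$, as claimed.

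The disjoint case carries the whole difficulty. Unlike case 1 there is no common point to anchor the monotonicity, and two issues must be confronted. The first is \emph{attainment}: for two unbounded closed convex sets the infimum $d$ need not be achieved --- the branch $A=\{(x,y):y\ge 1/x,\ x>0\}$ and the $x$-axis $B$ have $d=0$ with $A\bigcap B=\emptyset$ while the iterates escape to infinity --- so the conclusion genuinely requires that $d$ be attained, which is exactly what the standing well-posedness hypothesis supplies and without which the statement is false. The second is \emph{asymptotic regularity}: passing from ``the gap lengths converge'' to ``the displacement vectors $b_n-a_n$ converge'' is what keeps the iterates from sliding forever along a channel of width $d$ between parallel unbounded faces, and mere nonexpansiveness of $T$ does not suffice (a rotation is nonexpansive with a fixed point yet its iterates do not converge). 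It is precisely the \emph{firm} nonexpansiveness --- equivalently, the averagedness of $T$ --- that closes this gap, and securing it cleanly is the crux of the argument.
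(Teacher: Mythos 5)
The paper never actually proves this theorem: it is imported by citation only (and the citation --- a multi-agent coordination survey --- is plainly not a source for alternating projections; the classical references are Cheney--Goldstein or Bauschke--Borwein). So your argument is not competing with a proof in the paper, and judged on its own terms it is essentially correct and follows the standard route: for case 1, Fej\'er monotonicity with respect to any $p\in A\bigcap B$ plus the Pythagorean inequality, telescoping to kill the gaps, Bolzano--Weierstrass, and re-anchoring the monotonicity at the subsequential limit; for case 2, the Cheney--Goldstein characterization of $\operatorname{Fix}(P_A\circ P_B)$ as the points of $A$ nearest to $B$, averagedness of the composition of firmly nonexpansive maps, Krasnoselskii--Mann asymptotic regularity, and Fej\'er monotonicity with respect to a subsequential fixed point. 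The facts you invoke without proof (firm nonexpansiveness of projections, compositions of averaged maps are averaged, asymptotic regularity when the fixed-point set is nonempty, mutual nearest points of convex sets attain $\operatorname{dist}(A,B)$) are all standard and correctly deployed. Your most valuable contribution is a point the paper misses entirely: part 2 is \emph{false} as stated for general closed convex sets, since the infimal distance need not be attained (your hyperbola-versus-axis example, where $d=0$, $A\bigcap B=\emptyset$, and the iterates escape to infinity), so an attainment hypothesis must be added --- it holds automatically when one set is bounded, and it does hold in the paper's application, where the hyperplane sits strictly below an attained minimum of the epigraph intersection. One caveat on your own phrasing: you justify attainment by appeal to the paper's ``well-posedness'' assumption, but that assumption concerns the min-max problem, not the two arbitrary sets $A,B$ of this theorem; as a stand-alone statement the theorem still needs attainment written in as an explicit hypothesis rather than borrowed from the application.
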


Ordinary alternating projection can only achieve some point
arbitrarily on the intersection but not the orthogonal projection,
so we employ another variant projection algorithm$-$Dykstra's
alternating projection. This method is usually employed to the
problem

\begin{equation} \label{EQ_DykstraProblem}
{\underset{x\in R}{\text{minimize}}}\,{{\left\| x-r \right\|}^{2}}\ subject\ to\
x\in \bigcap _{i=1}^{n}{{A}_{i}},
\end{equation}
which provides the best approximations to the sets. Recently, Dykstra's algorithm has been extended to solve
least-squares \cite{escalante1996dykstra}, convex optimization
\cite{boyd2011distributed}, etc..

%In other words, the projection of the initial point onto the
%intersection of convex sets can be obtained.
Dykstra's alternating projection implements correction at each
projection to Bregman's method by subtracting the variable, i.e.,
increment. Following theorem provides detailed descriptions.

\begin{thm}\cite{cao2013overview}\label{Thm_Dykstra}
Let ${{A}_{1}},{{A}_{2}},\ldots {{A}_{n}}\subseteq {{R}^{n}}$ be the
closed convex sets with nonempty intersection. Given $x\in {R}^{n}$
iterate by
\begin{eqnarray}\label{EQ_Dykstra1}
x_{n}^{i}&:=&{{P}_{{{A}_{i}}}}\left( x_{n}^{i-1}-I_{n-1}^{i}\right),\nonumber\\
I_{n}^{i}&:=&x_{n}^{i}-\left( x_{n}^{i-1}-I_{n-1}^{i} \right),\nonumber\\
x_{n}^{0}&:=&x_{n-1}^{r},
\end{eqnarray}
with initial values $x_{1}^{0}:=x$, $I_{0}^{i}:=0$ then

\begin{equation}\label{EQ_Dykstra2}
{{x}_{n}}\to {{P}_{\bigcap _{i=1}^{n}{{A}_{i}}}}\left( x \right).
\end{equation}

\end{thm}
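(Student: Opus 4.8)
The plan is to show the cyclic iterates converge to $a:=P_{K}(x)$, where $K:=\bigcap_{i=1}^{r}A_{i}$; here I write $r$ for the number of sets (so the inner index runs $i=1,\dots,r$) and let the subscript $n$ count cycles. The starting point is the variational characterisation of each projection. Since $x_{n}^{i}=P_{A_{i}}(x_{n}^{i-1}-I_{n-1}^{i})$, the obtuse-angle criterion for the metric projection onto a closed convex set gives, for every $c\in A_{i}$,
\begin{equation}
\langle (x_{n}^{i-1}-I_{n-1}^{i})-x_{n}^{i},\, c-x_{n}^{i}\rangle\le 0 ,
\end{equation}
and because the update forces $x_{n}^{i}-(x_{n}^{i-1}-I_{n-1}^{i})=I_{n}^{i}$, this is exactly $\langle I_{n}^{i},\,c-x_{n}^{i}\rangle\ge 0$; equivalently $-I_{n}^{i}$ lies in the normal cone $N_{A_{i}}(x_{n}^{i})$. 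This encodes the convexity of $A_{i}$ and is the only structural fact I will use about the sets.

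I would next record a telescoping identity for the increments. Summing $x_{n}^{i}-x_{n}^{i-1}=I_{n}^{i}-I_{n-1}^{i}$ over a full cycle and then over cycles, and using $x_{n}^{0}=x_{n-1}^{r}$ together with the initialisation $I_{0}^{i}=0$, gives
\begin{equation}
x_{n}^{r}=x+\sum_{i=1}^{r}I_{n}^{i} .
\end{equation}
This identity is what ultimately lets me identify the limit: if $x_{n}^{r}\to\bar{x}$ and each $I_{n}^{i}\to\bar{I}^{i}$, then $x-\bar{x}=-\sum_{i}\bar{I}^{i}$ with each $-\bar{I}^{i}\in N_{A_{i}}(\bar{x})$, so $x-\bar{x}\in\sum_{i}N_{A_{i}}(\bar{x})\subseteq N_{K}(\bar{x})$, which is precisely the optimality condition characterising $\bar{x}=P_{K}(x)$.

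For convergence itself I would fix $c\in K$ and expand $\|x_{n}^{i-1}-I_{n-1}^{i}-c\|^{2}$ in two ways, using $x_{n}^{i-1}-I_{n-1}^{i}=x_{n}^{i}-I_{n}^{i}$, to obtain the energy identity
\begin{equation}
\|x_{n}^{i}-c\|^{2}=\|x_{n}^{i-1}-c\|^{2}+\|I_{n-1}^{i}\|^{2}-\|I_{n}^{i}\|^{2}-2\langle x_{n}^{i-1}-c,\,I_{n-1}^{i}\rangle+2\langle x_{n}^{i}-c,\,I_{n}^{i}\rangle .
\end{equation}
Here $\langle x_{n}^{i}-c,\,I_{n}^{i}\rangle\le 0$ by the inequality from the first step, and the squared-increment terms telescope across cycles for each fixed set $A_{i}$ (the initial contribution vanishing because $I_{0}^{i}=0$). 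Summing should yield that the iterates $\{x_{n}^{i}\}$ are bounded; since we work in the finite-dimensional space $\mathbb{R}^{n}$, a convergent subsequence then exists, the per-cycle displacements $x_{n}^{i}-x_{n}^{i-1}$ vanish along it (each $x_{n}^{i}\in A_{i}$ is closed), so all the inner limits agree and lie in every $A_{i}$, i.e.\ in $K$; combining this with the telescoping identity of the previous paragraph and the uniqueness of the projection upgrades subsequential convergence to full convergence to $a$.

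The main obstacle is the cross term $-2\langle x_{n}^{i-1}-c,\,I_{n-1}^{i}\rangle$ in the energy identity: it mixes the current iterate with the previous cycle's increment and carries no definite sign, so the naive Fej\'er monotonicity that works for subspaces (von Neumann's alternating projections) fails here. Controlling this term, and establishing that the increments $I_{n}^{i}$ genuinely converge rather than merely stay bounded, is exactly the delicate part of the Boyle--Dykstra argument; I expect essentially all of the difficulty to concentrate there, with the compactness and limit-identification steps being comparatively routine in $\mathbb{R}^{n}$.
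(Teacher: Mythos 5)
The paper never proves this theorem: it is imported verbatim with a citation (it is the classical Boyle--Dykstra theorem), so there is no in-paper proof to compare against and your proposal must stand on its own. It does not, for two reasons. The first gap is the one you admit. Your setup is correct as far as it goes --- the variational inequality $\langle I_n^i,\,c-x_n^i\rangle\ge 0$ for all $c\in A_i$, the telescoping identity $x_n^r=x+\sum_{i=1}^r I_n^i$, and the energy identity from expanding $\|x_n^{i-1}-I_{n-1}^i-c\|^2=\|x_n^i-I_n^i-c\|^2$ are all valid and are indeed the standard opening moves --- but everything that actually constitutes the theorem is deferred: controlling the signless cross term $-2\langle x_n^{i-1}-c,\,I_{n-1}^i\rangle$, deducing square-summability of the displacements, boundedness, and the Cauchy property of the cycle sequence. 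Declaring that ``the difficulty concentrates there'' is an accurate description of the Boyle--Dykstra proof, not a substitute for it; as written, no convergence statement is established at all.

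The second gap is more damaging: your limit-identification strategy is not merely incomplete but unworkable in general, because it presumes the increments $I_n^i$ converge, and they need not. If they did converge, then (by closedness of the normal-cone graph) you would get $x-\bar{x}=-\sum_i\bar{I}^i\in\sum_i N_{A_i}(\bar{x})$; but no constraint qualification is assumed, so this cone sum can be a proper subset of $N_{K}(\bar{x})$, and $x-P_K(x)$ can lie outside it. Concretely, take $A_1=\{y\in\mathbb{R}^2:y_1^2+(y_2-1)^2\le 1\}$, $A_2=\{y:y_2=0\}$, so $K=\{(0,0)\}$, and start at $x=(1,0)$. The theorem applies and the iterates converge to $P_K(x)=(0,0)$, yet $N_{A_1}\left((0,0)\right)+N_{A_2}\left((0,0)\right)$ is the vertical axis, which does not contain $x-(0,0)=(1,0)$. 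Since $\sum_i I_n^i=x_n^r-x\to(-1,0)$, while any limit point of $-I_n^i$ would have to lie in $N_{A_i}\left((0,0)\right)$, the increments cannot converge (nor even stay bounded). So the object your final step relies on --- the limits $\bar{I}^i$ --- does not exist in general, and the step ``upgrades subsequential convergence to full convergence'' collapses. The genuine proof identifies the limit differently: it accumulates the projection inequalities over the entire iteration history and passes to the limit directly in $\langle x-x_n^r,\,c-x_n^r\rangle\le\varepsilon_n$ for $c\in K$, never requiring individual increments to converge; you would need to adopt that (or an equivalent) device for both the convergence and the identification parts of your argument.
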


%%%%%%%%%%%%%%%%%%%%%%%%%%%%%%%%%%%%%%%%%%%%%%%%%%%%%%%%%%%%%
%%%%SECTION IV
%%%%%%%%%%%%%%%%%%%%%%%%%%%%%%%%%%%%%%%%%%%%%%%%%%%%%%%%%%%%%
\section{MIN-MAX distributed convex optimization algorithm}

In this section, we are going to interpret the original problem \eqnref{EQ_OriginalProblem} geometrically as the distance between some hyper-plane and the intersection of all epigraphs of ${{f}_{i}}$. To solve this geometric problem, we proposed our distributed alternating projection based algorithm. The proof of our algorithm guarantees the optimality of the solution.
%Finally, the stopping criteria is also presented for real-world implementation.

\subsection{Geometric interpretation of the MIN-MAX Problem}

From the definition of epigraph in \eqnref{EQ_Epgraph1}, the function $f$ can be regarded as the lower boundary of its epigraph,
\begin{equation}\label{EQ_GeoInterpret1}
\underset{x\in X}{\mathop{\min }} f\left( x \right)=\underset{x\in X}{\mathop{\min }} \left( \operatorname{epi}f\left( x \right)\centerdot \left[ \begin{aligned}
  & {{0}_{n\times 1}} \\
 & 1 \\
\end{aligned} \right] \right).
\end{equation}
Therefore, substitute the point-wise maximum $\underset{i \in \mathbb{Z_+}}{\mathop{\max}}\,{{f}_{i}}\left( x \right)$ into \eqnref{EQ_GeoInterpret1},
\begin{equation}\label{EQ_GeoInterpret2}
\mathop {\min }\limits_{x \in X} {\mkern 1mu} \mathop {\max }\limits_{i \in \mathbb{Z_+}} {\mkern 1mu} {f_i}\left( x \right) = \mathop {\min }\limits_{x \in X} \left( \left[ {{\mathop{\rm epi}\nolimits} \mathop {\max }\limits_{i \in \mathbb{Z_+}} {\mkern 1mu} {f_i}\left( x \right)} \right] \centerdot \left[ \begin{array}{l}
{0_{n \times 1}}\\
1
\end{array} \right] \right).
\end{equation}
Further, since $\underset{i\in \mathbb{Z_+}}{\mathop{\max }}\,{{f}_{i}}\left( x \right)$ is the point-wise maximum of finite number of convex functions ${{f}_{i}}$, apply \eqnref{EQ_Epgraph2} to \eqnref{EQ_GeoInterpret2}

\begin{equation}\label{EQ_GeoInterpret3}
\underset{x\in X}{\mathop{\min }}\,\underset{i\in \mathbb{Z_+}}{\mathop{\max }}\,{{f}_{i}}\left( x \right)=\underset{x\in X}{\mathop{\min}} \left(\bigcap\operatorname{epi}{{f}_{i}}\left( x \right) \centerdot \left[ \begin{aligned}
  & {{0}_{n\times 1}} \\
 & 1 \\
\end{aligned} \right]\right).
\end{equation}

Then the original problem \eqnref{EQ_OriginalProblem} is equivalent to
\begin{equation}\label{EQ_NewProblem}
\underset{x\in X}{\mathop{\min }}\left( \bigcap \operatorname{epi}{{f}_{i}}\left( x \right) \centerdot \left[ \begin{aligned}
  & {{0}_{n\times 1}} \\
 & 1 \\
\end{aligned} \right]\right)
\end{equation}
The geometric problem \eqnref{EQ_NewProblem} can be regarded as the problem of searching the lowest point of the intersection $\bigcap\operatorname{epi}{{f}_{i}}\left( x \right)$.
We notice that the supporting plane to the convex set $\bigcap\operatorname{epi}{{f}_{i}}\left( x \right)$ at the lowest point $\left( {{x}^{*}},{\mathop{\min }}\bigcap\operatorname{epi}{{f}_{i}}\left( x^* \right)\right)$ is
\begin{equation}\label{EQ_GeoSupporting}
\left\{ \left( x,t \right)|t={\mathop{\min }}\left(\bigcap\operatorname{epi}{{f}_{i}}\left( x^* \right)\centerdot \left[ \begin{aligned}
  & {{0}_{n\times 1}} \\
 & 1 \\
\end{aligned} \right]\right) \right\},
\end{equation}
where $x^*$ also achieves the feasible solution to \eqnref{EQ_OriginalProblem}. Therefore, the point that achieves the minimum distance between the convex set $\bigcap\operatorname{epi}{{f}_{i}}\left( x \right)$ to any hyper-plane below and parallel to the supporting plane 
is exactly  $\left( {{x}^{*}},{\mathop{\min }}\bigcap\operatorname{epi}{{f}_{i}}\left( x^* \right)\right)$.

Now, we are ready to express the original problem \eqnref{EQ_OriginalProblem} equivalently as,

%with the form
%\begin{equation}\label{EQ_GeoMinDisPoint}
%\left\{ \left( x,t \right)|t={{t}_{\min }} \right\} ~~subject~to~~
%{{t}_{\min }}\le {\mathop{\min }}\left(\bigcap\operatorname{epi}{{f}_{i}}\left( x^* \right)\centerdot \left[ \begin{aligned}
%  & {{0}_{n\times 1}} \\
% & 1 \\
%\end{aligned} \right]\right)
%\end{equation}
\begin{eqnarray}\label{EQ_GeoProblemInterpret}
\mathop{minimize} & \mathop{dist}\left( { \cap epi{f_i}\left( x \right),\left\{ {\left( {x,t} \right)|t = {t_{\min }}} \right\}} \right)\nonumber\\
subject~to & {{t}_{\min }}\le {\mathop{\min }}\left(\bigcap\operatorname{epi}{{f}_{i}}\left( x^* \right)\centerdot \left[ \begin{aligned}
  & {{0}_{n\times 1}} \\
 & 1 \\
\end{aligned} \right]\right).
\end{eqnarray}
If we can find the point that achieves the minimum distance, the feasible solution is also obtained.

\begin{rem}
If the the solution to \eqnref{EQ_NewProblem} is greater than zero, it is exactly the minimum distance between the epigraphs intersection $\bigcap\operatorname{epi}{{f}_{i}}\left( x \right)$ and the zero-time plane $\left\{ {\left( {x,t} \right)|t = 0} \right\}$.
\end{rem}

\subsection{MIN-MAX distributed convex optimization algorithm}

%With the geometric interpretation \eqnref{EQ_GeoProblemInterpret}, we focus on acquiring the intersection and the minimum distance. 
Recall the discussions in section II that Bregman’s method can obtain the distance of two convex sets. Since both the intersection and the hyper-plane are convex sets, we can directly apply Bregman's method to \eqnref{EQ_GeoProblemInterpret}. However, the procedure in that method requires the projection onto the intersection $\bigcap\operatorname{epi}{{f}_{i}}\left( x \right)$, which is difficult to obtain immediately especially in the distributed setting. To achieve this goal, our distributed algorithm brings in cyclic Dykstra’s projection method as the intermediate part in Bregman’s method which iteratively projects onto the intersection and the hyper-plane. According to Theorem \ref{Thm_Bregman} and \ref{Thm_Dykstra}, the algorithm for each agent performance under Assumption \ref{Assump_1} is proposed in details with following three steps:

Initialization: The hyper-plane is firstly determined by choosing a real number $t_{\min}$ small enough. Agent $i$ maintains its own guess on the point $\left(x_i,t_i\right)$ that achieves the minimum of the distance in \eqnref{EQ_GeoProblemInterpret} and the increment $I_{n}^{i}$ mentioned in \eqnref{EQ_Dykstra1}. The flag $flag$ called increment reset symbol that indicates whether the increment preserves or resets to zero passes over the network. Without loss of generality, Agent $1$ remembers its previous guess, and after comparing with the current guess it can decide the algorithm stopping time. The communication topology is cyclic and the order is determined by the indexes assigned to the agents.

1.	Agent $i$ receives the guess of the point ${\left( {x,t} \right)_n^{i - 1} - I_{n - 1}^i}$ and the increment reset symbol from previous agent $i-1$. Agent $i$ preforms the procedure described in Theorem 2, i.e., projecting ${\left( {x,t} \right)_n^{i - 1} - I_{n - 1}^i}$ with the increment onto its own function epigraph $\operatorname{epi}{{f}_{i}}\left( x \right)$ to get its guess $\left( {x,t} \right)_n^i$, and updating the increment whether “reset” or “preserve” according to the symbol, 
%\begin{eqnarray}
%  & x_{n}^{i}={{P}_{{{A}_{i}}}}\left( x_{n}^{i-1}-I_{n-1}^{i} \right), \\
% & I_{n}^{i}=\left\{ \begin{aligned}
%  & x_{n}^{i}-\left( x_{n}^{i-1}-I_{n-1}^{i} \right) \\
% & 0 \\
%\end{aligned} \right. \\
%\end{eqnarray}
\begin{eqnarray}\label{Eq_Algorithm1}
\left( {x,t} \right)_n^i = {P_{{A_i}}}\left( {\left( {x,t} \right)_n^{i - 1} - I_{n - 1}^i} \right),\\
I_n^i = \left\{ 
\begin{aligned}\label{Eq_Algorithm2}
&\left( {x,t} \right)_n^i - \left( {\left( {x,t} \right)_n^{i - 1} - I_{n - 1}^i} \right) & ~flag = 0\\
&0 & ~flag = 1
\end{aligned} \right..
\end{eqnarray}

2.	According to the cyclic interaction topology,agent $i$ passes the projection $\left( {x,t} \right)_n^i$ and the increment reset symbol $flag$ to the next agent $i+1$.

3.	During agent $1$'s turn in each iteration cycle, agent $1$ compares its new updated projection with its previous guess. If the error between them is acceptable, agent $1$ resets the increment to 1(“reset”), and projects the new updated projection onto the hyper-plane $\left\{ {\left( {x,t} \right)|t = {t_{\min }}} \right\}$, otherwise Agent $1$ receives guess from Agent $N$ and repeats step 1,
%This projection is marked as the initial guess for all agents.
\begin{eqnarray}\label{Eq_Algorithm3}
  & e=x_{n}^{1}-x_{n-1}^{1} \\
 & flag=\left\{ 
\begin{aligned}\label{Eq_Algorithm4}
  & 1~~&\left\| e \right\|<err \\
 & 0~~&\left\| e \right\|\ge err \\
\end{aligned} \right.
\end{eqnarray}

The first two steps are the implementation of Dykstra’s algorithm. In step 3, once the projection of the intersection of all convex functions’ epigraph is obtained that is to say the error between each iteration is small enough, we implement Bregman’s algorithm by projecting the result of Dykstra’s algorithm onto the hyper-plane. Taken the projection on the hyper-plane as the new guess of the feasible solution, the agents iteratively perform step 1-3, and finally obtained the feasible solution to problem \eqref{EQ_GeoProblemInterpret}.

The correctness of algorithm to problem can be provided by following theorem:

\begin{thm}
If the functions ${{f}_{i}}\left( x \right)$ are convex and the hyper-plane satisfies \eqnref{EQ_GeoProblemInterpret}, the convex set $\bigcap\operatorname{epi}{{f}_{i}}\left( x \right)$ has non-intersection with the hyper-plane. Give $x$ iterate by

\begin{eqnarray}
  & \left( {{x}_{p}},{{t}_{p}} \right)={{P}_{\bigcap epi{{f}_{i}}}}\left( {{x}_{n-1}},{{t}_{n-1}} \right), \label{Eq_AlgorithmProof1}\\
 & \left( {{x}_{n}},{{t}_{n}} \right)={{P}_{s}}\left( {{x}_{n}},{{t}_{n}} \right)=\left( {{x}_{p}},t \right), \label{Eq_AlgorithmProof2}
\end{eqnarray}
where ${{P}_{\bigcap epi{{f}_{i}}}}\left( {\centerdot} \right)$ is obtained by the procedure in theorem 2. Then
\begin{equation}\label{Eq_AlgorithmProof3}
{{x}_{n}}\to \underset{x}{\mathop{\arg \min }}\,\max {{f}_{i}}.
\end{equation}
\end{thm}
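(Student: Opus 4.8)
The plan is to recognize the iteration \eqnref{Eq_AlgorithmProof1}--\eqnref{Eq_AlgorithmProof2} as Bregman's alternating projection (Theorem \ref{Thm_Bregman}) applied to the two closed convex sets
\[
A := \bigcap\operatorname{epi}{{f}_{i}}\left( x \right), \qquad B := \left\{ \left( x,t \right) \mid t = t_{\min} \right\},
\]
where the projection $P_A$ onto the intersection is, at each outer step, supplied by the cyclic Dykstra procedure of Theorem \ref{Thm_Dykstra}. First I would confirm that the inner loop (steps 1--2, equations \eqnref{Eq_Algorithm1}--\eqnref{Eq_Algorithm2}) is exactly the Dykstra recursion \eqnref{EQ_Dykstra1} with $A_i = \operatorname{epi} f_i$; since each $f_i$ is convex, every $\operatorname{epi} f_i$ is closed and convex, and well-posedness of \eqnref{EQ_OriginalProblem} guarantees $\bigcap\operatorname{epi} f_i \ne \emptyset$, so Theorem \ref{Thm_Dykstra} yields $(x_p,t_p) = P_A(x_{n-1},t_{n-1})$. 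The outer step \eqnref{Eq_AlgorithmProof2} replaces only the last coordinate by $t_{\min}$, which is precisely the orthogonal projection $P_B$ onto the horizontal hyper-plane $B$ whose normal is $[0_{n\times 1};\,1]$; in particular it leaves the $x$-coordinate untouched, so $x_n = x_p$.

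Second, I would verify the hypotheses of Theorem \ref{Thm_Bregman}, case 2. Both $A$ and $B$ are closed and convex. Writing $t^* = \min_x \max_i f_i(x)$ for the lowest height of $A$, the standing constraint in \eqnref{EQ_GeoProblemInterpret} gives $t_{\min} \le t^*$, and for the asserted non-intersection I would take $t_{\min} < t^*$ (the boundary case $t_{\min} = t^*$ reduces $A\cap B$ to the single lowest point and is covered by case 1 with the same conclusion). Hence $A \cap B = \emptyset$, and Theorem \ref{Thm_Bregman} supplies limits $(x_p,t_p)\to a^*\in A$ and $(x_n,t_n)\to b^*\in B$ with $\|a^*-b^*\| = \operatorname{dist}(A,B)$.

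The decisive step is to identify $a^*$. Because $B$ is horizontal, for every $(x,t)\in A$ the distance to $B$ equals the vertical gap $t - t_{\min}$, and every point of $A = \operatorname{epi}\left(\max_i f_i\right)$ satisfies $t \ge \max_i f_i(x) \ge t^*$. Thus $\operatorname{dist}(A,B) = t^* - t_{\min}$, attained only at a point of minimal height, i.e. $a^* = (x^*, t^*)$ with $x^* = \arg\min_x \max_i f_i(x)$. Its orthogonal projection onto $B$ is $b^* = (x^*, t_{\min})$, so the $x$-component of $(x_n,t_n)$ converges to $x^*$; since $x_n$ is exactly that component, $x_n \to x^* = \arg\min_x \max_i f_i$, which is \eqnref{Eq_AlgorithmProof3}.

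I expect the main obstacle to be the nested-limit (two-timescale) nature of the scheme: the outer Bregman recursion of Theorem \ref{Thm_Bregman} presupposes the \emph{exact} projection $P_A$, whereas the algorithm obtains $P_A$ only asymptotically through the Dykstra inner loop, terminated when the increment satisfies $\|e\|<err$ in \eqnref{Eq_Algorithm3}--\eqnref{Eq_Algorithm4}. A fully rigorous argument must show that replacing $P_A$ by its finite-$err$ Dykstra approximant does not destroy convergence of the outer sequence, either by driving $err\to 0$ along the outer iterations or by a perturbation estimate bounding the propagated error. A secondary point needing care is uniqueness of $x^*$, required to upgrade subsequential accumulation to genuine convergence of the $x$-component; strict convexity of $\max_i f_i$, or an argument that all minimal-height points share the optimal value, closes this gap.
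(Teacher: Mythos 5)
Your proposal follows essentially the same route as the paper's own proof: invoke Theorem \ref{Thm_Bregman} (case 2) on $A=\bigcap\operatorname{epi}f_i$ and the hyper-plane $B$, with the projection onto $A$ supplied by the Dykstra procedure of Theorem \ref{Thm_Dykstra}, and then identify the $x$-component of the limiting minimum-distance pair as $\arg\min_x\max_i f_i$. Your write-up is in fact more careful than the paper's three-sentence argument, which asserts the separation of the two sets and the convergence without proof and never addresses the two genuine issues you flag at the end (the outer iteration only ever receives a finite-tolerance Dykstra approximation of $P_A$, and uniqueness of the minimizer is needed to pin down the limit).
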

\begin{proof}
Since $\bigcap\operatorname{epi}{{f}_{i}}\left( x \right)$ and the hyper-plane are convex sets, it can be immediately derived from the definition that they are separated. Apparently, \eqnref{Eq_AlgorithmProof1} and \eqnref{Eq_AlgorithmProof2} the implementation of theorem 1, and the projection in \eqnref{Eq_AlgorithmProof2} is obtained from Theorem 2. Consequently, $\left( {{x_n},{t_n}} \right)$ approaches the point that achieves the minimum distance between the intersection and the hyper-plane. As mentioned in previous subsection, the first part of that point is the feasible solution to the problem \eqref{EQ_OriginalProblem}, therefore the limit of $x_n$ satisfies \eqref{Eq_AlgorithmProof3}.
\end{proof}

\section{Application to the minimum time consensus problem}
The multi-agent consensus problem has attracted great attentions during the last decade, and achieved remarkable development and success. The consensus problem concentrates on the distributed negotiation between agents that can finally achieve consensus on some coordinated variables. With this fundamental concept, numerous problems can be treated by consensus such as formation, rendezvous, alignment problems, etc. However, the negotiation result and time consumption is unpredicted before the process, and influenced by communication topology and their initial state of coordinated variables. With this character, we are interested in how to achieve the minimum time consensus with admissible control input, namely time-optimal consensus problem. Actually, if without the bounded control input constraint, the consensus can be achieved within any limited time. We firstly present this problem as an MIN-MAX distributed optimization problem, and show the consensus problems with first order agents or second order agents with zero velocity constraints to the initial and final state are can be treated as Problem 1 which can be solved by our algorithm. After that, demonstrations are provided to test the validity of our algorithm.

Let the minimum reachable time be the function ${{f}_{i}}$ of the states in the state-space for agent $i$, and then the minimum reachable time for all agents is the maximum of all ${{f}_{i}}$, i.e.,
\begin{equation}
\underset{i}{\mathop{\max }}\,{{f}_{i}}\left( x \right).
\end{equation}
The minimum time for all agents consensus becomes
\begin{equation}
\underset{x}{\mathop{\operatorname{min}}}\,\underset{i}{\mathop{\max }}\,{{f}_{i}}\left( x \right),
\end{equation}
where the state achieves above minimum is the corresponding time-optimal consensus state. If the functions are convex, the minimum time consensus problem is exactly Problem \eqnref{EQ_OriginalProblem}. The attainable set for any linear model agents with admissible control input is a continues function $\Omega\left(t\right)$, and for specific time instance $t$ the set is closed, bounded and strictly convex.\cite{meschler1963time} Apparently, the attainable set is the epigraph of ${{f}_{i}}\left(t\right)$, therefore the time-optimal consensus problem can be also interpreted according to \eqnref{EQ_GeoProblemInterpret} as the minimizing the distance between the intersection of attainable sets and the hyper-plane.
%！！！！！！！！！！！！！借鉴参考文献说法 1963 time-optimal
The following part will analyse the first-order and second-order systems separately.

\subsection{First order systems}
Consider following first order system model

\begin{equation}
\dot{x}=u, ~~ -{{u}_{\max }}\le u\le {{u}_{\max }}.
\end{equation}

Agent $i$ starts from the initial state ${{x}_{0}}$ to ${{x}_{1}}$ with the minimum reach-time
\begin{equation}
\frac{{{\left\| {{x}_{0}}-{{x}_{1}} \right\|}_{2}}}{{{u}_{\max }}}.
\end{equation}
Its attainable state set is provided as,
\begin{equation}
\left\{ {\left( {x,t} \right)|\left\| {\frac{{x - {x_0}}}{{{u_{\max }}}} \le t} \right\|,t \ge 0} \right\},
\end{equation}
which forms a cone in state-time space. Since cones are convex sets, we can directly implement our algorithms (\ref{Eq_Algorithm1}-\ref{Eq_Algorithm4}). From any initial state, the agents apply the algorithm to calculate the time-optimal state for consensus, and form their own optimal control, normally saturation control, towards the consensus state.
%Projection onto cones
%function [p_p t_p]=cone_project(p,t,p0,v)
%s=norm(p-p0);
%if ((s-abs(v*t))>0)
%    p_p=(p-p0)*(v^2*s+v*t)/(v^2*s+s)+p0;%[p0(1)+p(1)*(v^2*s-v*t)/(v^2*s+s),p0(2)+p(2)*(v^2*s-v*t)/(v^2*s+s)];
%    t_p=(v*s+t)/(v^2+1);
%elseif (s<=v*t)
%    p_p=p;
%    t_p=t;
%elseif (s<=-v*t)
%    p_p=p0;
%    t_p=0;
%end
Since the result is quite simple such that it can be covered by following subsection, and due to the page limitation, the simulation for first order agents is eliminated.
\subsection{Second-order Systems}
Consider the second order system model

\begin{equation}
\left\{ \begin{array}{l}
{{\dot x}_1} = {x_2}\\
{{\dot x}_2} = u
\end{array} \right., - {u_{\max }} \le u \le {u_{\max }}
\end{equation}

Agent $i$ starts from an arbitrary initial states $\left( {{x}_{11}},{{x}_{21}} \right)$ to the arbitrary states $\left( {{x}_{12}},{{x}_{22}} \right)$. The time optimal control should be bang-bang strategy which is to execute a max positive (negative) input for a period of time and then to reverse the input for another period of time.

The control law is provided by following equation:

\begin{equation}
u=\mathop{sgn} \left( {{x}_{12}}-x_1-\frac{1}{2}\left( x_2+{{x}_{22}} \right)\left\| x_2-{{x}_{22}} \right\| \right){{u}_{\max }}
\end{equation}

The optimal time-cost is determined by the relation between initial and final state.
\begin{equation}\label{Eq_SecOrdOptTime}
\left\{ \begin{aligned}
{\left[ {t + \left( {{x_{22}} + {x_{21}}} \right)} \right]^2} = 4\left( {{x_{12}} - {x_{11}}} \right) + 2x_{21}^2 + 2x_{22}^2,~~ & {x_{12}} - {x_{11}} - \frac{1}{2}\left( {{x_{21}} + {x_{22}}} \right)\left\| {{x_{21}} - {x_{22}}} \right\| < 0\\
{\left[ {t - \left( {{x_{21}} + {x_{22}}} \right)} \right]^2} =  - 4\left( {{x_{12}} - {x_{11}}} \right) + 2x_{21}^2 + 2x_{22}^2,~~ & {x_{12}} - {x_{11}} - \frac{1}{2}\left( {{x_{21}} + {x_{22}}} \right)\left\| {{x_{21}} - {x_{22}}} \right\| > 0
\end{aligned} \right.
\end{equation}
%\begin{equation}
%{{\left[ t+\left( {{y}_{2}}+{{y}_{1}} \right) \right]}^{2}}=4\left( {{x}_{2}}-{{x}_{1}} \right)+2y_{1}^{2}+2y_{2}^{2}
%\end{equation}
%
%\begin{equation}
%{{x}_{2}}-{{x}_{1}}-\frac{1}{2}\left( {{y}_{1}}+{{y}_{2}} \right)\left| {{y}_{1}}-{{y}_{2}} \right|<0
%\end{equation}
% right
%
%\begin{equation}
%{{\left[ t-\left( {{y}_{1}}+{{y}_{2}} \right) \right]}^{2}}=-4\left( {{x}_{2}}-{{x}_{1}} \right)+2y_{1}^{2}+2y_{2}^{2}
%\end{equation}
%
%\begin{equation}
%{{x}_{2}}-{{x}_{1}}-\frac{1}{2}\left( {{y}_{1}}+{{y}_{2}} \right)\left| {{y}_{1}}-{{y}_{2}} \right|>0
%\end{equation}  left

The attainable states set is presented as,
%\begin{equation}
%-{{\left[ {{y}_{2}}-\left( t+{{y}_{1}} \right) \right]}^{2}}=4\left( {{x}_{2}}-{{x}_{1}} \right)-2{{\left( t+{{y}_{1}} \right)}^{2}}+2y_{1}^{2}
%\end{equation}
% right
%\begin{equation}
%{{\left[ {{y}_{2}}+\left( t-{{y}_{1}} \right) \right]}^{2}}=4\left( {{x}_{2}}-{{x}_{1}} \right)+2{{\left( t-{{y}_{1}} \right)}^{2}}-2y_{1}^{2}
%\end{equation}
% left
\begin{eqnarray}\label{Eq_SecOrdEpigraph}
\{\left({x,t}\right)|{\left[ {t + \left( {{x_{22}} + {x_{21}}} \right)} \right]^2} \ge 4\left( {{x_{12}} - {x_{11}}} \right) + 2x_{21}^2 + 2x_{22}^2,&{x_{12}} - {x_{11}} - \frac{1}{2}\left( {{x_{21}} + {x_{22}}} \right)\left\| {{x_{21}} - {x_{22}}} \right\| < 0\nonumber\\
{\left[{t - \left( {{x_{21}} + {x_{22}}} \right)} \right]^2} \ge  - 4\left( {{x_{12}} - {x_{11}}} \right) + 2x_{21}^2 + 2x_{22}^2,&{x_{12}} - {x_{11}} - \frac{1}{2}\left( {{x_{21}} + {x_{22}}} \right)\left\| {{x_{21}} - {x_{22}}} \right\| > 0\}
\end{eqnarray}

Unfortunately, the epigraph or attainable set \eqnref{Eq_SecOrdEpigraph} is not convex. Even when the problem has been simplified such that the agents have zero initial and final velocities, i.e. ${{x}_{21}}={{x}_{22}}=0$, the epigraph composed by two mirror symmetry half-parabola,
\begin{equation}\label{Eq_SecOrdEasyEpigraph}
{{t}^{2}}\ge4\left\| {{x}_{12}}-{{x}_{11}} \right\|
\end{equation}
is sill not convex, and we could not directly implement our algorithm. But we notice that if we apply following quadratic transformation to the time
\begin{equation}\label{Eq_SecOrdEasyTrans}
s={{t}^{2}},
\end{equation}
the attainable state set becomes a convex function to time square.
\begin{figure}
 \centering
  \includegraphics[width=9cm,height=5cm]{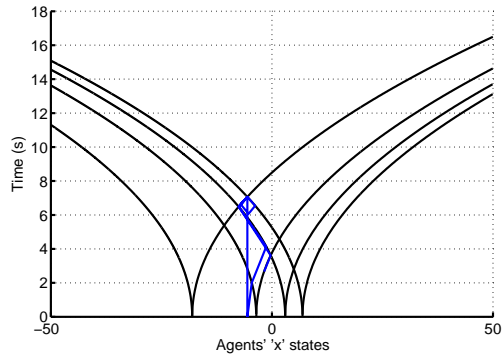}
 \caption{Projecting Result of the Algorithm onto Attainable sets}
 \label{f:Zero-ProjectionResult}
\end{figure}

We assume that each agent with initial state $x_i$, and their attainable sets $s_i$ is presented as
\begin{equation}\label{Eq_SecOrdEasyNewEpi}
s_i\ge 4\left\| {x}-{{x}_{i}} \right\|,
\end{equation}
where $s_i$ is the square of real time $t$. Therefore, we can substitute the new epigraph \eqnref{Eq_SecOrdEasyNewEpi} into \eqnref{EQ_GeoProblemInterpret}. Further, since the consensus time larger than zero, the hyper-plane can be chosen as the zero time plane $\left\{\left(x,t\right)| t = 0\right\}$.

Consider four agents, whose initial states are
${x_1} = \left( { - 3.542884,0} \right)$, 
${x_2} = \left( {3.001152,0} \right)$, 
${x_3} = \left( {6.924106,0} \right)$, 
${x_4} = \left( { - 18.0296,0} \right)$, 
respectively. The time-optimal consensus state is $\left(-5.5527,0\right)$ and the optimal time is
7.0645s.

We apply our algorithm to those four agents, and the demonstration result is presented as followed.

The projections on the intersection and the hyper-plane is presented in Fig. \ref{f:Zero-ProjectionResult}. Through the transformation \eqnref{Eq_SecOrdEasyTrans}, the problem can be treated as the convex functions and finally achieves the minimum of intersection. We could find that the Bregman's method is only proceeded 2 times, which is quite efficient.
\begin{figure}
 \centering
  \includegraphics[width=9cm,height=5cm]{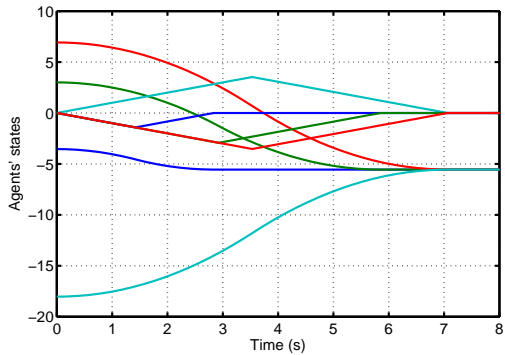}
 \caption{State-Trajectory of the Agents with Optimal Control}
 \label{f:zero_state}
\end{figure}

\begin{figure}
 \centering
  \includegraphics[width=9cm,height=5cm]{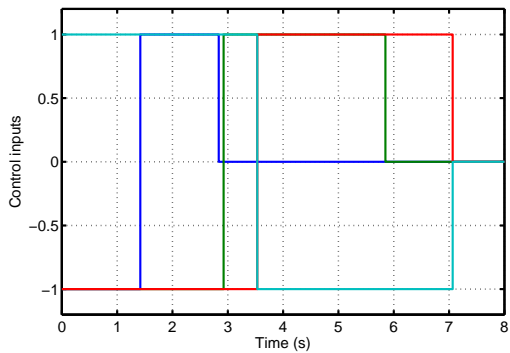}
 \caption{Control Sequences of the Agents}
 \label{f:zero_input}
\end{figure}
The state-trajectory and control sequences are presented in Fig. \ref{f:zero_state} and \ref{f:zero_input}. The control inputs are typically bang-bang control, and the agents achieves consensus with the minimum time.

%\begin{rem}
%The result obtained by our algorithm is the time-optimal solution, which is the limit of the ordinary consensus controls.
%\end{rem}
%%
%We should point out that this transformation do not disturb the search for minimum consensus state, and we provide following theorem.
%
%%定理 一一映射不影响最终结果,二次映射为单调映射 若满足二次映射条件下的t最小 那么一次也为最小
%\begin{thm}
%not disturbing
%\end{thm}
%\begin{proof}
%Proof:
%\end{proof}
%Remark:  for heterogeneous agents, the equation () transforms all epigraphs cones with different slopes.

To make a further step, we can implement our algorithm without proof to the more complicated non-convex case such that only the final velocities are assumed to be zero. When we apply different transformations, though this case does not satisfy the convex assumption, the result is quite inspiring.

Let ${{x}_{22}}=0$ in equations \eqnref{Eq_SecOrdOptTime} and \eqnref{Eq_SecOrdEpigraph}. The optimal time-cost curves and the corresponding epigraphs are 

\begin{equation}\label{Eq_SecOrdHardOptTime}
\left\{ \begin{aligned}
{\left[ {t +  x_{22}} \right]^2} = 4\left( {{x_{12}} - {x_{11}}} \right) + 2x_{21}^2 ,~~ & {x_{12}} - {x_{11}} - \frac{1}{2}x_{21}\left\|x_{21}\right\| < 0\\
{\left[ {t - x_{21})} \right]^2} =-4\left( {{x_{12}} - {x_{11}}} \right) + 2x_{21}^2 ,~~ & {x_{12}} - {x_{11}} - \frac{1}{2}x_{21}\left\|x_{21}\right\| > 0
\end{aligned} \right.,
\end{equation}
and
\begin{eqnarray}\label{Eq_SecOrdHardEpigraph}
\{\left({x,t}\right)|{\left[ {t + x_{21}} \right]^2} \ge 4\left( {{x_{12}} - {x_{11}}} \right) + 2x_{21}^2 ,&{x_{12}} - {x_{11}} - \frac{1}{2} x_{21}\left\| {{x_{21}}} \right\| < 0\nonumber\\
{\left[{t - x_{21}} \right]^2} \ge  - 4\left( {{x_{12}} - {x_{11}}} \right) + 2x_{21}^2 ,&{x_{12}} - {x_{11}} - \frac{1}{2} x_{21}\left\| {{x_{21}}} \right\| > 0\}.
\end{eqnarray}

%
%
%\begin{equation}
%t=\left\| {{y}_{1}} \right\|
%\end{equation}
%
%\begin{equation}
%{{x}_{2}}={{x}_{1}}+\frac{1}{p}{{y}_{1}}\left\| {{y}_{1}} \right\|
%\end{equation}

Because the half-parabolas in \eqref{Eq_SecOrdHardEpigraph} are with different vertexes, it’s hard to find a simple transformation like \eqref{Eq_SecOrdEasyTrans} to make the epigraph to be cones in this case. For the consideration of convex assumption, we apply different quadratic transformations to those two parts, left and right, of different agents' epigraphs.
%if (x_normal > x0+1/p*2*v0*abs(v0)) %Project onto right side
%    if y_normal<-v0
%        y_normal=-v0;
%    end
%    x_con=x_normal;
%    y_con=k*(1/p*(y_normal+v0)^2+1/k*abs(v0)-...
%        1/p*2*(v0^2+v0*abs(v0)));
%else
%    if y_normal<v0
%        y_normal=v0;
%    end
%    x_con=x_normal;
%    y_con=k*(1/p*(y_normal-v0)^2+1/k*abs(v0)-...
%        1/p*2*(v0^2-v0*abs(v0)));
%End
if ${x_{i{1_{normal}}}} > {x_{i1}} + \frac{1}{2}{x_{i2}}\left\| {{x_{i2}}} \right\|$ 
\begin{equation}
{x_{i{1_{con}}}} = {x_{i{1_{normal}}}}\\
{x_{i{2_{con}}}} = \left\{ 
\begin{aligned}
\frac{1}{4}{\left( {{x_{i{2_{normal}}}} + {x_{i2}}} \right)^2} + \left\| {{x_{i2}}} \right\| - \frac{1}{2}\left( {x_{i2}^2 - {x_{i2}}\left\| {{x_{i2}}} \right\|} \right) & {x_{i{2_{normal}}}} \ge  - {x_{i2}}\\
\left\| {{x_{i2}}} \right\| - \frac{1}{2}\left( {x_{i2}^2 - {x_{i2}}\left\| {{x_{i2}}} \right\|} \right) & {x_{i{2_{normal}}}} <  - {x_{i2}}
\end{aligned} \right.
\end{equation}
where $x_i=\left(x_{i1},x_{i2}\right)$ is the initial state of agent $i$, $x_{i_{normal}}=\left(x_{i1_{normal}},x_{i1_{normal}}\right)$ is the projection in normal axis and $x_{i_{con}}=\left(x_{i1_{con}},x_{i1_{con}}\right)$ is the in transformed axis.

\begin{figure}
 \centering
  \includegraphics[width=8cm,height=8.67cm]{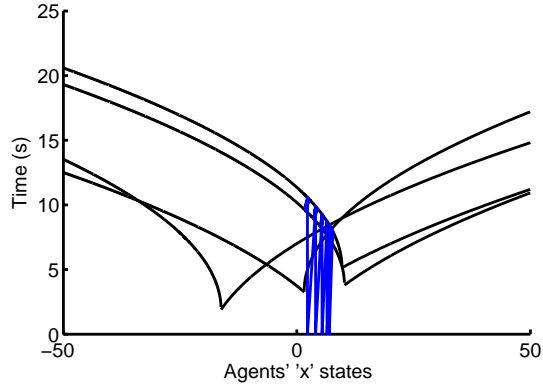}
 \caption{Projecting Result of the Algorithm onto Attainable sets}
 \label{f:ProjectionResult}
\end{figure}

Consider four agents, whose initial states are
${x_1} = \left( { - 3.542884,5.140490} \right)$,
${x_2} = \left( {3.001152,3.794066} \right)$,
${x_3} = \left( {6.924106,-3.281824} \right)$,
${x_4} = \left( { - 18.0296,1.9023} \right)$,
respectively. The time-optimal consensus state is $\left(6.9366,0\right)$ and the optimal time is
8.4467s.

We apply our algorithm to those four agents, and the demonstration result is presented as followed.
The projections on the intersection and the hyper-plane is presented in Fig. \ref{f:ProjectionResult}. Though we applied different transformation to the epigraph of the agent, and the transformation was distinguish for different agents, the projection still converged to the minimum of the epigraph intersection. We believe the ability of non-convex treatment is releated to the characteristics of the alternating projection.
\begin{figure}
 \centering
  \includegraphics[width=8cm,height=8.67cm]{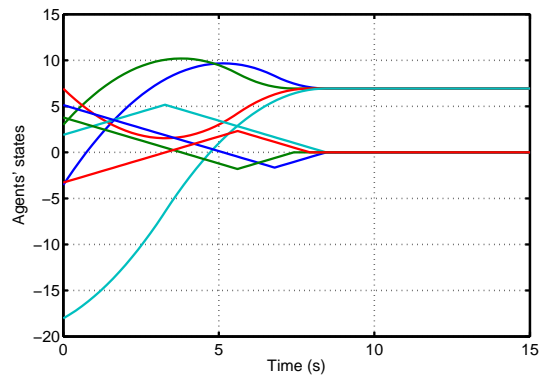}
 \caption{State Trajectory of the Agents with Optimal Control}
 \label{f:arbi_state}
\end{figure}

\begin{figure}
 \centering
  \includegraphics[width=8cm,height=8.67cm]{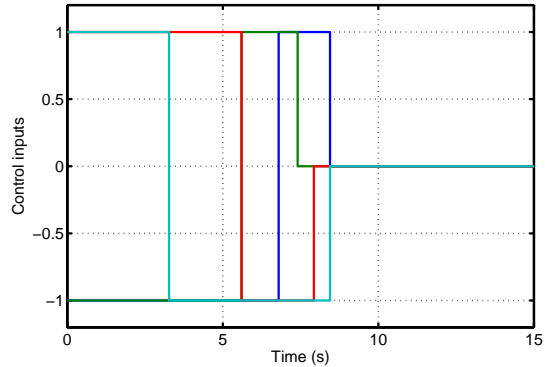}
 \caption{Control History of the Agents}
 \label{f:arbi_input}
\end{figure}
The state-trajectory and control sequences are presented in Fig. \ref{f:arbi_state} and \ref{f:arbi_input}, which performed the same as previous example.

%%%%%%%%%%%%%%%%%%%%%%%%%%%%%%%%%%%%%%%%%%%%%%%%%%%%%%%%%%%%%
%%%%SECTION VI
%%%%%%%%%%%%%%%%%%%%%%%%%%%%%%%%%%%%%%%%%%%%%%%%%%%%%%%%%%%%%
\section{Conclusion}
In this paper, we interpreted the distributed MIN-MAX convex optimal problems geometrically as the problem of searching the minimum distance between some hyper-plane and the epigraphs intersection of convex functions. To solve this problem, we proposed an alternating projection based algorithm which is constructed by the Bregman’s and the cyclic Dykstra’s alternating projection method. This distributed algorithm guarantees the optimal solution to the problem with only neighbor-communication on a cyclic communication topology. Moreover, we implement our algorithm to the multi-agent minimum-time consensus problem. We have shown that the first-order systems and the second-order systems with zero initial and final velocities can be formulated into a standard distributed MIN-MAX convex optimal problems. With the finite time attainable state set as the optimize functions, the time optimal state for consensus can be obtained by our algorithm, and each agent can execute optimal control to that state. Heuristically, we also implement our algorithm to the non-convex cases such that the second-order systems achieve consensus on zero velocities state without proof. At last, the demonstrations are presented to illustrate the efficiency and validity of our algorithm, even for the non-convex cases.

%%%%%%%%%%%%%%%%%%%%%%%%%%%%%%%%%%%%%%%%%%%%%%%%%%%%%%%%%%%%%
%%%%SECTION V
%%%%%%%%%%%%%%%%%%%%%%%%%%%%%%%%%%%%%%%%%%%%%%%%%%%%%%%%%%%%%
%\section*{Appendix}
%
%An appendix, if needed, should appear before the acknowledgments.
%Use the 'starred' version of the \verb|\section| commands to avoid
%section numbering.

\section{Acknowledgments}
This paper is based upon work supported by the National Natural
Science Foundation of China (Grant No. 61273349, 61175109).

%\begin{thebibliography}{9}% maximum number of references (for label width)
% \bibitem{rebek:82bk}
% Rebek, A., {\it Fickle Rocks}, Fink Publishing, Chesapeake, 1982.
%\end{thebibliography}

\bibliography{bibtex_database}

\begin{thebibliography}{10}
\newcommand{\enquote}[1]{``#1''}

\bibitem{nedic2013distributed}
Nedic, A. and Olshevsky, A., \enquote{Distributed optimization over
  time-varying directed graphs,} {\em arXiv preprint arXiv:1303.2289\/}, 2013.

\bibitem{lee2013distributed}
Lee, S. and Nedich, A., \enquote{Distributed random projection algorithm for
  convex optimization,} 2013.

\bibitem{shi2013reaching}
Shi, G., Johansson, K.~H., and Hong, Y., \enquote{Reaching an optimal
  consensus: dynamical systems that compute intersections of convex sets,} {\em
  Automatic Control, IEEE Transactions on\/}, Vol.~58, No.~3, 2013,
  pp.~610--622.

\bibitem{shi2012randomized}
Shi, G. and Johansson, K.~H., \enquote{Randomized optimal consensus of
  multi-agent systems,} {\em Automatica\/}, Vol.~48, No.~12, 2012,
  pp.~3018--3030.

\bibitem{hare2013derivative}
Hare, W. and Nutini, J., \enquote{A derivative-free approximate gradient
  sampling algorithm for finite minimax problems,} {\em Computational
  Optimization and Applications\/}, Vol.~56, No.~1, 2013, pp.~1--38.

\bibitem{srivastava2011distributed}
Srivastava, K., Nedic, A., and Stipanovic, D., \enquote{Distributed min-max
  optimization in networks,} {\em Digital Signal Processing (DSP), 2011 17th
  International Conference on\/}, IEEE, 2011, pp. 1--8.

\bibitem{srivastava2013distributed}
Srivastava, K., Nedi{\'c}, A., and Stipanovi{\'c}, D., \enquote{Distributed
  Bregman-distance algorithms for min-max optimization,} {\em Agent-Based
  Optimization\/}, Springer, 2013, pp. 143--174.

\bibitem{tam2012method}
Tam, M.~K., {\em The method of alternating projections\/}, Ph.D. thesis,
  University of Newcastle, Australia, 2012.

\bibitem{combettes2011proximal}
Combettes, P.~L. and Pesquet, J.-C., \enquote{Proximal splitting methods in
  signal processing,} {\em Fixed-point algorithms for inverse problems in
  science and engineering\/}, Springer, 2011, pp. 185--212.

\bibitem{cao2013overview}
Cao, Y., Yu, W., Ren, W., and Chen, G., \enquote{An overview of recent progress
  in the study of distributed multi-agent coordination,} {\em Industrial
  Informatics, IEEE Transactions on\/}, Vol.~9, No.~1, 2013, pp.~427--438.

\bibitem{ren2005survey}
Ren, W., Beard, R.~W., and Atkins, E.~M., \enquote{A survey of consensus
  problems in multi-agent coordination,} {\em American Control Conference,
  2005. Proceedings of the 2005\/}, IEEE, 2005, pp. 1859--1864.

\bibitem{kranakis2006mobile}
Kranakis, E., Krizanc, D., and Rajsbaum, S., \enquote{Mobile agent rendezvous:
  A survey,} {\em Structural Information and Communication Complexity\/},
  Springer, 2006, pp. 1--9.

\bibitem{li2014distributed}
Li, C. and Qu, Z., \enquote{Distributed finite-time consensus of nonlinear
  systems under switching topologies,} {\em Automatica\/}, 2014.

\bibitem{xiao2013finite}
Xiao, F., Wang, L., and Chen, T., \enquote{Finite-Time Consensus in Networks of
  Integrator-Like Dynamic Agents with Directional Link Failure,} 2013.

\bibitem{sundaram2007finite}
Sundaram, S. and Hadjicostis, C.~N., \enquote{Finite-time distributed consensus
  in graphs with time-invariant topologies,} {\em American Control Conference,
  2007. ACC'07\/}, IEEE, 2007, pp. 711--716.

\bibitem{boyd2004convex}
Boyd, S.~P. and Vandenberghe, L., {\em Convex optimization\/}, Cambridge
  university press, 2004.

\bibitem{boyd2003alternating}
Boyd, S. and Dattorro, J., \enquote{Alternating projections,} {\em Lecture
  notes of EE 392 o, Stanford University, Autumn Quarter\/}, Vol.~2004, 2003.

\bibitem{escalante1996dykstra}
Escalante, R. and Raydan, M., \enquote{Dykstra's Algorithm for a Constrained
  Least-squares Matrix Problem,} {\em Numerical linear algebra with
  applications\/}, Vol.~3, No.~6, 1996, pp.~459--471.

\bibitem{boyd2011distributed}
Boyd, S., Parikh, N., Chu, E., Peleato, B., and Eckstein, J.,
  \enquote{Distributed optimization and statistical learning via the
  alternating direction method of multipliers,} {\em Foundations and
  Trends{\textregistered} in Machine Learning\/}, Vol.~3, No.~1, 2011,
  pp.~1--122.

\bibitem{meschler1963time}
Meschler, P., \enquote{Time-optimal rendezvous strategies,} {\em Automatic
  Control, IEEE Transactions on\/}, Vol.~8, No.~4, 1963, pp.~279--283.

\end{thebibliography}
\bibliographystyle{aiaa}

\end{document}